\pdfoutput=1
\documentclass[
	fontsize=12pt,
	oneside,
	DIV=classic,
	paper=a4,
	pagesize=auto,
	titlepage=yes,
	bibliography=totoc,
	listof=totocnumbered,
	parskip=half,
	abstract=yes
	]{scrartcl}

\newcommand{\cpfversion}{Version 0.4.2}
\newcommand{\cpfbuild}{20260512}

\usepackage[utf8]{inputenc}
\usepackage[T1]{fontenc}
\usepackage[english]{babel}
\usepackage{times}
\usepackage[activate={true,nocompatibility},final,tracking=false,kerning=true,spacing=false,factor=1100,stretch=10,shrink=10]{microtype}

\usepackage[intlimits]{amsmath}
\usepackage{amssymb}
\usepackage{amsthm}
\usepackage{enumerate}
\usepackage{booktabs}
\usepackage{csquotes}

\usepackage[headsepline]{scrlayer-scrpage}
\clearpairofpagestyles
\ohead{Fries, Christian~P.}
\ihead{Faster Forward Sensitivities}
\ifoot{\tiny \copyright 2026 Christian Fries}
\ofoot{\tiny \cpfversion\ (\cpfbuild)}

\usepackage{hyperref}
\hypersetup{
	pdfpagelayout=TwoPageRight,
	pdfborderstyle={/S/U/W 1},
	colorlinks = true,
	linktocpage = true,
	linkcolor=blue,
	citecolor=blue,
	urlcolor=blue,
	pdfauthor={Christian P. Fries},
	pdfsubject={Reduced stochastic hedge ratios from pathwise algorithmic differentiation},
	pdftitle={Faster Forward Sensitivities: Reduced stochastic hedge ratios from pathwise algorithmic differentiation},
	pdfstartview=FitH
}

\usepackage{orcidlink}

\newtheorem{proposition}{Proposition}[section]
\newtheorem{remark}{Remark}[section]

\newcommand{\Rset}{\mathbb{R}}

\newcommand{\norm}[1]{{\lVert #1 \rVert}}
\newcommand{\indicatorfcn}{\mathbf{1}}
\newcommand{\transposed}{\top}
\newcommand{\la}{\langle}
\newcommand{\ra}{\rangle}
\newcommand{\dd}{\mathrm{d}}
\newcommand{\argmin}{\operatorname*{arg\,min}}
\newcommand{\spanop}{\operatorname{span}}

\newcommand{\Piop}{\Pi}

\begin{document}

\author{
	Christian P. Fries%
	\textsuperscript{\orcidlink{0000-0003-4767-2034}}%
	\thanks{Department of Mathematics, University of Munich, Munich, Germany;
	DZ BANK AG Deutsche Zentral-Genossenschaftsbank, Platz der Republik,
	60325 Frankfurt am Main, Germany;
	\url{http://www.christianfries.com}.}
}
\title{Faster Forward Sensitivities}
\subtitle{\small Reduced stochastic hedge ratios from pathwise algorithmic differentiation\\[2ex]
{\normalfont\small \cpfversion}}
\date{May 12, 2026}

\maketitle

\begin{abstract}
Monte-Carlo valuation engines can generate pathwise sensitivities of a derivative value with respect to a high-dimensional vector of model primitives. Hedge ratios with respect to market instruments are then linked to these primitive sensitivities by a pathwise linear relation. Solving this relation independently on every simulated path may be expensive, unstable, and unnecessarily high-dimensional.

This paper studies reduced stochastic hedge ratios of the form $\phi_j^r=\sum_{q=1}^r\xi_j^qX_q$, where the number of solution basis functions is much smaller than the number of Monte-Carlo paths. The hedge-instrument sensitivity tensor is not replaced by its own basis expansion; it is retained through empirical averages over the simulated paths. The basis ansatz alone does not determine the coefficients, so two coefficient criteria are distinguished. The first minimizes the full empirical pathwise residual $\sum_\ell\|A_\ell\phi_\ell^r-b_\ell\|_2^2$. The second is a projected moment equation requiring $\langle A\phi^r-b,Y_s\rangle_N=0$ for selected test functions. The special case $Y_s=X_s$ is the usual Galerkin choice; different test functions give a Petrov--Galerkin formulation. The criteria coincide in special cases but differ when the hedge-instrument sensitivities are path-dependent. The paper gives the tensor and matrix forms of both reductions, discusses regularization and conditioning, and records implementation considerations.

The constructions are motivated by sensitivity-based margin valuation adjustment and replication-consistent liquidity forecasting, where pathwise primitive sensitivities have to be converted into hedge ratios with respect to market instruments.
\end{abstract}

\microtypesetup{protrusion=false}
\tableofcontents
\microtypesetup{protrusion=true}

\clearpage

\section{Introduction}

Monte-Carlo methods for financial derivatives produce present values $V(0)$, cash-flow information, and sensitivities such as $\frac{\dd V(0)}{\dd P_j(0)}$ with respect to observed market instruments that enter the model calibration. In some applications, pathwise forward sensitivities $\frac{\dd V(t,\omega)}{\dd P_j(t,\omega)}$ are relevant. Exemplary applications are sensitivity-based margin valuation adjustment calculations~\cite{Fries2017ForwardSensitivities} and liquidity forecasting~\cite{Fries2025LiquidityForecasting}.

\medskip

A typical simulation produces, at a time $t$, a vector of model primitives
\begin{equation}
	M(t,\omega)=\left(M_1(t,\omega),\ldots,M_n(t,\omega)\right)\in\Rset^n,
\end{equation}
where the components may be forward rates, zero-coupon bond prices, discount factors, volatility factors, credit-state variables, or other quantities used by the valuation model. Let $V(t,\omega)$ denote the on-path value process of a derivative, understood as the pathwise sum of future cash flows under the chosen discounting or numeraire convention.

Algorithmic differentiation, in particular reverse-mode or adjoint algorithmic differentiation, can provide the primitive sensitivity vector
\begin{equation}
	b_i(t,\omega)=\frac{\dd V(t,\omega)}{\dd M_i(t,\omega)},
	\qquad i=1,\ldots,n.
\end{equation}
Hedging, however, is usually performed with market instruments. Let
\begin{equation}
	P(t,\omega)=\left(P_1(t,\omega),\ldots,P_m(t,\omega)\right)\in\Rset^m
\end{equation}
denote the chosen hedge instruments and define the stochastic hedge ratios
\begin{equation}
	\phi_j(t,\omega)=\frac{\dd V(t,\omega)}{\dd P_j(t,\omega)},
	\qquad j=1,\ldots,m.
\end{equation}
If
\begin{equation}
	A_{ij}(t,\omega)=\frac{\dd P_j(t,\omega)}{\dd M_i(t,\omega)},
\end{equation}
then the chain rule gives the pathwise linear relation
\begin{equation}
	A(t,\omega)\phi(t,\omega)=b(t,\omega).
	\label{eq:pathwise-system-intro}
\end{equation}
The direct approach solves \eqref{eq:pathwise-system-intro} separately for every Monte-Carlo path. This creates $Nm$ pathwise hedge unknowns for $N$ paths and may require many small, possibly ill-conditioned, linear or least-squares solves.

The reduction considered here represents only the hedge ratios in a finite empirical solution basis,
\begin{equation}
	\phi_j^r(t,\omega)=\sum_{q=1}^r \xi_j^q X_q(\omega),
	\qquad r\ll N,
\end{equation}
while the pathwise hedge-instrument sensitivities $A(t,\omega)$ are kept in the empirical equations. This removes the path dimension from the unknowns: the coefficients are $mr$ numbers, not $Nm$ pathwise values.

The basis ansatz does not by itself specify how the coefficients should be chosen. Two natural choices are used in this paper.
\begin{enumerate}[(i)]
	\item The empirical $L^2$ formulation minimizes the original pathwise residual $A\phi^r-b$ over all hedge ratios in the basis.
	\item The projected formulation matches residual moments against selected test functions. Using the same functions for trial and test gives the Galerkin case; using different functions gives a Petrov--Galerkin system.
\end{enumerate}
The first formulation gives the best empirical fit to the original sensitivity equation in the selected primitive-space metric. The second formulation matches low-dimensional residual moments and avoids forming products of the type $A^{\transposed}A$. They agree in special cases, for example when the residual can be made zero, but generally differ for stochastic $A$.

The rest of the paper is organised as follows. Section~\ref{sec:setup} defines the pathwise sensitivity equation. Section~\ref{sec:basis} introduces the empirical bases and the hedge-ratio ansatz. Section~\ref{sec:formulations} derives the empirical $L^2$ and projected coefficient systems. Section~\ref{sec:regularization} discusses solution and regularization. Section~\ref{sec:discussion} discusses special cases, approximation, stability, and basis choice. Section~\ref{sec:algorithm} gives implementation and complexity considerations.
Appendix~\ref{sec:nonorthogonal} gives formulas for non-orthonormal bases.

\subsection{Relation to existing methods and applications}

The method is close in spirit to regression-based Monte-Carlo methods and Galerkin projection. A stochastic quantity is represented in a low-dimensional basis and coefficients are estimated from simulated paths. In least-squares Monte-Carlo and American Monte-Carlo methods, the projected quantity is typically a conditional expectation or continuation value; see, for example, \cite{LongstaffSchwartz2001,FriesLectureNotes2007}. Here the target is not a conditional expectation itself, but a sensitivity equation linking model primitives to hedge instruments. Allowing different solution and test bases gives the usual Petrov--Galerkin generalization.

The calculation of sensitivities in Monte-Carlo simulations has a large literature, including finite-difference, pathwise, likelihood-ratio, and algorithmic-differentiation methods; see \cite{BroadieGlasserman1996,Glasserman2004}. Adjoint algorithmic differentiation provides a particularly efficient way to compute many sensitivities of one scalar valuation output and has become a standard tool for Monte-Carlo Greeks; see \cite{GilesGlasserman2006,CapriottiGiles2011,HomescuAAD2011,GriewankWalther2008}. Stochastic automatic differentiation extends this viewpoint to random variables and stochastic operators such as expectations, conditional expectations, and indicator functions; see \cite{FriesAutoDiff4MonteCarlo}. For products whose valuation contains stochastic operators, for example callable or Bermudan products, the direct use of AAD requires additional care; see \cite{AntonovAADforBermudan2017,CapriottiJiangMarcinaAAD2016,FriesAutoDiff4AmericanMonteCarlo}. Discontinuous payoffs and indicator functions are treated in \cite{Fries2018DiscontinuousIndicators}.

The forward-sensitivity construction in \cite{Fries2017ForwardSensitivities}, building on stochastic automatic differentiation \cite{FriesAutoDiff4MonteCarlo,FriesAutoDiff4AmericanMonteCarlo}, shows that stochastic forward sensitivities can be obtained by a single stochastic backward automatic-differentiation sweep followed by a conditional-expectation step. That work is complementary to the present paper. It explains how to obtain the pathwise primitive sensitivities $b_{\ell i}=\dd V/\dd M_i$ efficiently. The present paper starts from these primitive sensitivities, together with the hedge-instrument sensitivities $A_{\ell i j}=\dd P_j/\dd M_i$, and studies reduced systems for converting them into stochastic hedge ratios with respect to market instruments.

The role of basis functions is related but different in the two settings. In \cite{Fries2017ForwardSensitivities}, basis functions are used to approximate conditional expectations of stochastic derivatives. In the present paper, basis functions determine the admissible stochastic hedge rule and, in the projected formulation, the residual moments to be enforced. The warning from the forward-sensitivity literature remains relevant: a poor basis may project away economically important state information. For example, callable products may require basis functions or test functions that encode exercise-region information.

The approach also differs from first solving the pathwise systems $A_\ell\phi_\ell=b_\ell$ and then regressing the resulting hedge ratios. The reduced equations avoid this intermediate pathwise solve and can remain well-defined when the pathwise systems are singular, non-unique, or noisy. Algorithmic differentiation supplies the raw pathwise derivatives $b_{\ell i}$ and $A_{\ell i j}$; the reduced coefficient problem turns these derivatives into a parsimonious stochastic hedge rule.

There is a related model-to-market-sensitivity literature. Li~\cite{Li1999ModelCalibrationHedging} studies the translation of model-parameter sensitivities into delta and vega equivalents with respect to hedging instruments in calibrated models. CVA sensitivity methods similarly use calibration Jacobians to convert model sensitivities into market sensitivities; see, for example, \cite{CrepeyLiNguyenSaadeddine2024}. Regression sensitivities for standardized risk factors in SIMM and FRTB are considered in \cite{AlbaneseCaenazzoFrankel2016}, where ridge regularization is used to obtain stable sensitivities. These works are close in motivation to the present paper. The distinction is that the construction here is pathwise and stochastic: the hedge ratios are functions of the simulated state, while the pathwise hedge-instrument sensitivity tensor is retained in the reduced empirical equations.

Several approaches also determine hedge or risk coefficients by regression or simulation. Hedged Monte-Carlo methods include a hedge strategy in the Monte-Carlo valuation and estimate hedge coefficients together with the price \cite{PottersBouchaudSestovic2001}. Avellaneda and Gamba \cite{AvellanedaGamba2002} characterize Monte-Carlo hedge ratios with respect to input prices through moments of simulated cash flows. Regression-based methods for sensitivities and hedging, especially for American or Bermudan products, are developed in \cite{BelomestnyMilsteinSchoenmakers2007,WangCaflisch2010,JainLeitaoOosterlee2017}. These methods share the use of simulated paths and finite-dimensional approximations, but they do not form the reduced coefficient system from the pathwise relation $A_\ell\phi_\ell=b_\ell$.

Applications to sensitivity-based initial-margin simulation and margin valuation adjustment provide one motivation for forward sensitivities; see \cite{Fries2017ForwardSensitivities,FriesLandgrafViehmann2018}. Related comparisons of forward and backward differentiation for forward sensitivities are discussed in \cite{FriesBackToTheFuture2018}. Issues caused by discontinuous payoffs and exercise indicators are well known in Monte-Carlo sensitivity estimation \cite{Glasserman2004,PiterbargDeltasOfCallableLIBORExotics2004,Fries2018DiscontinuousIndicators} and motivate careful treatment of basis selection and residual diagnostics.

\section{Pathwise sensitivities and hedge ratios}
\label{sec:setup}

We fix a time $t$ throughout the derivation and suppress $t$ from the notation when there is no ambiguity. Let $\omega_1,\ldots,\omega_N$ denote Monte-Carlo paths. On path $\omega_\ell$ write
\begin{equation}
	M_\ell=M(t,\omega_\ell),
	\qquad
	P_{\ell j}=P_j(t,\omega_\ell),
	\qquad
	V_\ell=V(t,\omega_\ell).
\end{equation}
The primitive sensitivity vector is
\begin{equation}
	b_{\ell i}:=\frac{\dd V(t,\omega_\ell)}{\dd M_i(t,\omega_\ell)},
	\qquad \ell=1,\ldots,N,\quad i=1,\ldots,n.
\end{equation}
The hedge-instrument sensitivity tensor is
\begin{equation}
	A_{\ell i j}:=\frac{\dd P_j(t,\omega_\ell)}{\dd M_i(t,\omega_\ell)},
	\qquad \ell=1,\ldots,N,\quad i=1,\ldots,n,
	\quad j=1,\ldots,m.
\end{equation}
Thus
\begin{equation}
	A=(A_{\ell i j})\in\Rset^{N\times n\times m}.
\end{equation}

The stochastic hedge ratios are the pathwise coefficients
\begin{equation}
	\phi_{\ell j}:=\phi_j(t,\omega_\ell)
	=\frac{\dd V(t,\omega_\ell)}{\dd P_j(t,\omega_\ell)},
	\qquad \ell=1,\ldots,N,
	\quad j=1,\ldots,m.
\end{equation}
When the chain rule can be applied pathwise, the primitive sensitivities satisfy
\begin{equation}
	\sum_{j=1}^m A_{\ell i j}\phi_{\ell j}=b_{\ell i},
	\qquad \ell=1,\ldots,N,
	\quad i=1,\ldots,n.
	\label{eq:pathwise-linear-equations}
\end{equation}
Equivalently, if $A_\ell=(A_{\ell i j})_{i,j}\in\Rset^{n\times m}$, $\phi_\ell=(\phi_{\ell j})_j\in\Rset^m$, and $b_\ell=(b_{\ell i})_i\in\Rset^n$, then
\begin{equation}
	A_\ell \phi_\ell=b_\ell,
	\qquad \ell=1,\ldots,N.
	\label{eq:pathwise-matrix-equations}
\end{equation}

\begin{remark}[Exact and approximate hedging]
Equation~\eqref{eq:pathwise-matrix-equations} may fail to have a unique solution on a given path. This may happen because the hedge instruments do not span all primitive sensitivities, because there are more hedge instruments than primitive directions, or because $A_\ell$ is ill-conditioned. The reduced construction below is therefore stated through coefficient problems that can be solved directly, in least-squares form, or with regularization. Exact pathwise representability is not required for the algebraic construction.
\end{remark}

\section{Basis representation of stochastic hedge ratios}
\label{sec:basis}

Let $X_1,\ldots,X_r$ be scalar solution basis functions on the simulated path space. These functions determine the admissible hedge-ratio representation and may depend on the state at time $t$, selected path features, or market observables available at time $t$. Define
\begin{equation}
	X_{\ell k}:=X_k(\omega_\ell),
	\qquad \ell=1,\ldots,N,
	\quad k=1,\ldots,r.
\end{equation}
The empirical inner product is
\begin{equation}
	\la U,V\ra_N:=\frac{1}{N}\sum_{\ell=1}^N U(\omega_\ell)V(\omega_\ell).
\end{equation}
For the main text we assume empirical orthonormality,
\begin{equation}
	\la X_k,X_q\ra_N
	=\frac{1}{N}\sum_{\ell=1}^N X_{\ell k}X_{\ell q}
	=\delta_{kq}.
	\label{eq:empirical-orthonormality}
\end{equation}
The non-orthonormal case is given in Appendix~\ref{sec:nonorthogonal}.

Let
\begin{equation}
	\mathcal U_r:=\spanop\{X_1,\ldots,X_r\}.
\end{equation}
For a scalar pathwise random variable $U$, the empirical orthogonal projection onto $\mathcal U_r$ is
\begin{equation}
	\Piop_rU=\sum_{s=1}^r\la U,X_s\ra_NX_s.
\end{equation}
We approximate each stochastic hedge ratio by an element of $\mathcal U_r$:
\begin{equation}
	\phi_j^r(t,\omega)=\sum_{q=1}^r \xi_j^q X_q(\omega),
	\qquad j=1,\ldots,m.
	\label{eq:basis-ansatz-continuous}
\end{equation}
On the Monte-Carlo paths,
\begin{equation}
	\phi_{\ell j}^r=\sum_{q=1}^r \xi_j^q X_{\ell q},
	\qquad \ell=1,\ldots,N,
	\quad j=1,\ldots,m.
	\label{eq:basis-ansatz-paths}
\end{equation}
The residual generated by this ansatz is
\begin{equation}
	R_{\ell i}(\xi):=
	\sum_{j=1}^m A_{\ell i j}\sum_{q=1}^r\xi_j^qX_{\ell q}-b_{\ell i},
	\qquad i=1,\ldots,n.
	\label{eq:residual-definition}
\end{equation}
Equivalently, with $x^q=(\xi_j^q)_{j=1}^m\in\Rset^m$,
\begin{equation}
	\phi_\ell^r=\sum_{q=1}^r x^q X_{\ell q},
	\qquad
	R_\ell(\xi)=A_\ell\phi_\ell^r-b_\ell.
\end{equation}
The remaining question is how to choose the coefficients $\xi_j^q$. The next section gives two choices.

For the projected formulation we also allow a separate test basis $Y_1,\ldots,Y_p$ with path values
\begin{equation}
	Y_{\ell s}:=Y_s(\omega_\ell),
	\qquad \ell=1,\ldots,N,
	\quad s=1,\ldots,p.
\end{equation}
Using $Y_s=X_s$ and $p=r$ gives the standard Galerkin case. Allowing $Y$ to differ from $X$ gives a Petrov--Galerkin projected moment system. This separates the representation of the hedge rule from the residual moments one wants to enforce.

\section{Reduced coefficient formulations}
\label{sec:formulations}

This section derives two coefficient systems for the same hedge-ratio ansatz \eqref{eq:basis-ansatz-continuous}. The empirical $L^2$ formulation minimizes the full pathwise residual. The projected formulation matches residual moments against the basis.

\subsection{Empirical L2 residual minimization}
\label{sec:ls-formulation}

The direct reduced least-squares problem is
\begin{equation}
	\widehat\xi^{\mathrm{LS}}
	=
	\argmin_{\xi}
	\frac{1}{N}\sum_{\ell=1}^N
	\norm{A_\ell\phi_\ell^r-b_\ell}_2^2.
	\label{eq:pathwise-ls-objective}
\end{equation}
In matrix notation this is the problem suggested by the expression $\norm{AXc-b}_2^2$. Define the design matrix
\begin{equation}
	\mathcal D_{(\ell,i),(j,q)}:=A_{\ell i j}X_{\ell q},
	\qquad
	y_{(\ell,i)}:=b_{\ell i},
\end{equation}
with rows indexed by $(\ell,i)$ and columns by $(j,q)$. If $z_{(j,q)}=\xi_j^q$, then
\begin{equation}
	\widehat z^{\mathrm{LS}}
	=
	\argmin_{z\in\Rset^{mr}}
	\frac{1}{N}\norm{\mathcal Dz-y}_2^2.
	\label{eq:design-ls}
\end{equation}
The normal equations are
\begin{equation}
	\mathcal G z=h,
	\qquad
	\mathcal G:=\frac{1}{N}\mathcal D^{\transposed}\mathcal D,
	\qquad
	h:=\frac{1}{N}\mathcal D^{\transposed}y.
	\label{eq:ls-normal-flat}
\end{equation}
In tensor form,
\begin{equation}
	G_{jk}^{qp}:=\frac{1}{N}\sum_{\ell=1}^N\sum_{i=1}^n
	A_{\ell i j}A_{\ell i k}X_{\ell q}X_{\ell p},
	\label{eq:G-definition}
\end{equation}
\begin{equation}
	h_j^q:=\frac{1}{N}\sum_{\ell=1}^N\sum_{i=1}^n
	A_{\ell i j}b_{\ell i}X_{\ell q},
	\label{eq:h-definition}
\end{equation}
and the equations read
\begin{equation}
	\sum_{k=1}^m\sum_{p=1}^rG_{jk}^{qp}\xi_k^p=h_j^q,
	\qquad j=1,\ldots,m,
	\quad q=1,\ldots,r.
	\label{eq:ls-normal-tensor}
\end{equation}
The matrix $\mathcal G$ is symmetric positive semidefinite. If the design matrix has full column rank, the unregularized solution is unique.

More generally one may minimize a weighted residual
\begin{equation}
	\frac{1}{N}\sum_{\ell=1}^N
	\norm{W_\ell(A_\ell\phi_\ell^r-b_\ell)}_2^2,
	\label{eq:weighted-pathwise-ls}
\end{equation}
where $W_\ell$ specifies the metric in primitive-sensitivity space. This is relevant when the primitive components have different units or scales.

\subsection{Projected moment matching}
\label{sec:projection}

The projected formulation tests the residual against the test basis $Y_1,\ldots,Y_p$:
\begin{equation}
	\la R_i(\xi),Y_s\ra_N=0,
	\qquad i=1,\ldots,n,
	\quad s=1,\ldots,p.
	\label{eq:residual-orthogonality}
\end{equation}
For $Y_s=X_s$ this is the Galerkin equation $\Piop_r((A\phi^r)_i)=\Piop_r b_i$. For a different test basis it is a Petrov--Galerkin, or projected moment, equation.

Written on the Monte-Carlo paths, \eqref{eq:residual-orthogonality} is
\begin{equation}
	\frac{1}{N}\sum_{\ell=1}^N
	\left(
	\sum_{j=1}^m A_{\ell i j}\sum_{q=1}^r\xi_j^qX_{\ell q}-b_{\ell i}
	\right)Y_{\ell s}=0.
	\label{eq:projected-paths}
\end{equation}
Define
\begin{equation}
	B_{ij}^{sq}:=\frac{1}{N}\sum_{\ell=1}^N A_{\ell i j}X_{\ell q}Y_{\ell s},
	\qquad
	\beta_i^s:=\frac{1}{N}\sum_{\ell=1}^N b_{\ell i}Y_{\ell s}.
	\label{eq:B-beta-definition}
\end{equation}
Then the projected system is
\begin{equation}
	\sum_{j=1}^m\sum_{q=1}^r B_{ij}^{sq}\xi_j^q=\beta_i^s,
	\qquad i=1,\ldots,n,
	\quad s=1,\ldots,p.
	\label{eq:reduced-tensor-system}
\end{equation}
The tensor $B$ is the action of the pathwise sensitivity tensor on the solution basis, tested against the chosen residual moments. No separate basis expansion of $A$ is used.

For numerical solution, flatten
\begin{equation}
	\mathcal B_{(i,s),(j,q)}:=B_{ij}^{sq},
	\qquad
	z_{(j,q)}:=\xi_j^q,
	\qquad
	g_{(i,s)}:=\beta_i^s.
\end{equation}
With, for example,
\begin{equation}
	\operatorname{row}(i,s):=(s-1)n+i,
	\qquad
	\operatorname{col}(j,q):=(q-1)m+j,
\end{equation}
one obtains
\begin{equation}
	\mathcal Bz=g,
	\qquad
	\mathcal B\in\Rset^{np\times mr},
	\quad z\in\Rset^{mr},
	\quad g\in\Rset^{np}.
	\label{eq:flattened-system}
\end{equation}
\subsection{Difference between the two reductions}
\label{sec:comparison}

Both formulations use the same solution space $\mathcal U_r^m$ for the hedge ratios. They differ in the test space for the residual. The empirical $L^2$ formulation satisfies
\begin{equation}
	\la R(\xi),A(X_qe_j)\ra_{N,n}=0,
\end{equation}
where $A(X_qe_j)$ denotes the vector-valued path process $\ell\mapsto A_\ell e_jX_{\ell q}$, $\la u,v\ra_{N,n}=N^{-1}\sum_\ell u_\ell^{\transposed}v_\ell$, and $e_j$ is the $j$-th unit vector in $\Rset^m$. Hence the residual is orthogonal to $A\mathcal U_r^m$. The projected formulation instead imposes orthogonality to the chosen test space $\mathcal W_p^n$, where $\mathcal W_p=\spanop\{Y_1,\ldots,Y_p\}$.

For $n=m=1$ the distinction is already visible:
\begin{align}
	\text{projected:}\quad
	&\sum_q\left(\frac{1}{N}\sum_\ell A_\ell X_{\ell q}Y_{\ell s}\right)\xi^q
	=\frac{1}{N}\sum_\ell b_\ell Y_{\ell s}, \\
	\text{least squares:}\quad
	&\sum_q\left(\frac{1}{N}\sum_\ell A_\ell^2X_{\ell q}X_{\ell s}\right)\xi^q
	=\frac{1}{N}\sum_\ell A_\ell b_\ell X_{\ell s}.
\end{align}
Thus the least-squares formulation weights errors by $A_\ell^2$, while the projected formulation uses one factor of $A_\ell$.

The empirical $L^2$ formulation is preferable when the objective is to minimize the original pathwise chain-rule residual in a chosen primitive-space metric. It gives a symmetric positive semidefinite normal matrix, but it involves products $A^{\transposed}A$, may square condition numbers if solved through normal equations, and depends on the scaling of the primitive components.

The projected formulation is a moment-matching equation. It avoids products $A^{\transposed}A$, can be cheaper to assemble when $m$ is large, and, at the level of the exact projected equations, is invariant under deterministic invertible left transformations of the primitive equations. If an inconsistent projected system is solved in an ordinary Euclidean least-squares norm, the chosen norm again matters unless the metric is transformed accordingly. The additional flexibility of the projected formulation is the separate choice of test functions: one may represent hedges in a smooth or implementable basis $X$ while enforcing moments against a different set of economically relevant diagnostics $Y$. It does not, however, minimize the full pathwise residual; it only controls the tested residual moments.

The two formulations coincide when the residual can be made zero in the reduced space. They also coincide, after the appropriate least-squares interpretation, when $A_\ell$ is deterministic and independent of the path. In general they are different, and the choice should be tied to the hedging objective.

\section{Solving and regularization}
\label{sec:regularization}

For the empirical $L^2$ formulation, a basic regularized problem is
\begin{equation}
	\widehat z_\lambda^{\mathrm{LS}}
	=
	\argmin_{z\in\Rset^{mr}}
	\left\{
	\frac{1}{N}\norm{\mathcal Dz-y}_2^2+
	\lambda\norm{z}_2^2
	\right\},
	\qquad \lambda\ge 0.
	\label{eq:tikhonov-ls}
\end{equation}
For the projected formulation, the analogous regularized moment problem is
\begin{equation}
	\widehat z_\lambda^{\mathrm{G}}
	=
	\argmin_{z\in\Rset^{mr}}
	\left\{
	\norm{\mathcal Bz-g}_2^2+
	\lambda\norm{z}_2^2
	\right\}.
	\label{eq:tikhonov-galerkin}
\end{equation}
Here $\lambda=0$ gives the unregularized least-squares problem whenever the corresponding linear system is overdetermined, underdetermined, or inconsistent. For $\lambda>0$, the usual normal equations are available, but QR, singular-value, or iterative least-squares methods are often preferable.

More general regularizers can encode smoothness, penalize high-order basis coefficients, or enforce proximity to a prior hedge rule $z_0$:
\begin{equation}
	\argmin_z
	\left\{
	\norm{W(Cz-d)}_2^2+
	\lambda\norm{L(z-z_0)}_2^2
	\right\}.
	\label{eq:tikhonov-general}
\end{equation}
Here $(C,d)$ is either $(\mathcal D,y)$ for the full residual formulation or $(\mathcal B,g)$ for the projected formulation, $W$ is an optional weight matrix, and $L$ is a regularization operator.

After solving for $\widehat z$, recover the coefficients $\widehat\xi_j^q$ and reconstruct
\begin{equation}
	\widehat\phi_j^r(t,\omega)=\sum_{q=1}^r\widehat\xi_j^qX_q(\omega),
	\qquad j=1,\ldots,m.
	\label{eq:reconstruction-continuous}
\end{equation}
On the simulated paths,
\begin{equation}
	\widehat\phi_{\ell j}^r=\sum_{q=1}^r\widehat\xi_j^qX_{\ell q}.
	\label{eq:reconstruction-paths}
\end{equation}

\section{Discussion}
\label{sec:discussion}

\subsection{Interpretation and special cases}
\label{sec:special-cases}

\subsubsection{Only the hedge ratios are approximated}

Both reduced formulations approximate
\begin{equation}
	\phi_j\approx \phi_j^r\in\mathcal U_r,
\end{equation}
not
\begin{equation}
	A_{ij}\approx A_{ij}^r.
\end{equation}
The simulated values of $A_{\ell i j}$ enter the empirical equations directly. The distinction between the two formulations is not whether $A$ is projected; it is how the coefficients of the hedge-ratio ansatz are chosen.

\subsubsection{Constant basis}

For $r=1$ with $X_1\equiv1$, the hedge ratios are deterministic. With the constant test function $Y_1\equiv1$, the projected formulation gives
\begin{equation}
	\left(\frac{1}{N}\sum_{\ell=1}^N A_\ell\right)x
	=
	\frac{1}{N}\sum_{\ell=1}^N b_\ell.
\end{equation}
The empirical $L^2$ formulation gives the aggregate least-squares hedge
\begin{equation}
	\left(\frac{1}{N}\sum_{\ell=1}^N A_\ell^{\transposed}A_\ell\right)x
	=
	\frac{1}{N}\sum_{\ell=1}^N A_\ell^{\transposed}b_\ell.
\end{equation}
These are generally different unless $A_\ell$ is deterministic, or unless the residual can be made zero.

\subsubsection{Deterministic hedge-instrument sensitivity matrix}

If $A_\ell=A_0$ is independent of the path, the projected blocks satisfy
\begin{equation}
	B^{sq}=A_0H^{sq},
	\qquad
	H^{sq}:=\la X_q,Y_s\ra_N.
\end{equation}
For the Galerkin choice $Y_s=X_s$ with an empirically orthonormal basis, $H^{sq}=\delta_{sq}$ and the projected system decouples into $A_0x^s=b^s$, where
\begin{equation}
	b_i^s:=\la b_i,X_s\ra_N.
\end{equation}
The empirical $L^2$ equations decouple into $A_0^{\transposed}A_0x^s=A_0^{\transposed}b^s$. If $A_0x^s=b^s$ is solvable, the solutions agree. If it is not solvable, the projected system should itself be interpreted in a least-squares sense to recover the same deterministic least-squares hedge.

\subsubsection{Full empirical path basis}

At the opposite extreme, take $r=N$ and, for the projected formulation, $p=N$ with
\begin{equation}
	X_{\ell q}=Y_{\ell q}=\sqrt{N}\,\indicatorfcn_{\{\ell=q\}}.
\end{equation}
Then the reduced problems decouple path by path. The empirical $L^2$ formulation gives the pathwise least-squares solves, while the projected formulation gives the pathwise equations themselves, up to the basis scaling.

\subsubsection{Coupling across basis functions}

For stochastic $A_\ell$, the projected blocks
\begin{equation}
	B^{sq}=\frac{1}{N}\sum_{\ell=1}^N A_\ell X_{\ell q}Y_{\ell s}
\end{equation}
need not vanish for $s\ne q$. Similarly, the empirical $L^2$ normal matrix generally couples all pairs of solution-basis functions through products $A_\ell^{\transposed}A_\ell X_{\ell q}X_{\ell p}$. Block-diagonal approximations may be useful as preconditioners, but they are not equivalent to the full reduced equations unless the off-diagonal blocks are negligible for the intended objective.

\subsection{Approximation, conditioning, and stability}
\label{sec:stability}

The reduced methods have three main sources of error:
\begin{enumerate}[(i)]
	\item \textbf{Basis error.} The true hedge ratios may not lie in $\mathcal U_r^m$.
	\item \textbf{Sampling error.} Empirical averages approximate population inner products.
	\item \textbf{Linear-algebra error.} The reduced matrices may be ill-conditioned or rank deficient.
\end{enumerate}

\begin{proposition}[Exact recovery in the reduced space]
Assume there exist coefficients $\xi_j^q$ such that
\begin{equation}
	A_\ell\phi_\ell^r=b_\ell,
	\qquad \ell=1,\ldots,N.
\end{equation}
Then these coefficients solve both the empirical $L^2$ normal equations and the projected moment equations for any test basis. If the relevant reduced matrix has full column rank, the coefficient vector is unique for the corresponding formulation.
\end{proposition}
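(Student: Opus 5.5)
The plan is to observe that both reduced systems are linear in the residual $R_\ell(\xi)=A_\ell\phi_\ell^r-b_\ell$, so zero residual on every path forces every tested moment to vanish. First I will rewrite each system as a statement about $R$. For the projected formulation this is immediate, because \eqref{eq:projected-paths} is equivalent to \eqref{eq:reduced-tensor-system} by the definitions \eqref{eq:B-beta-definition}. Explicitly,
\begin{equation*}
	\sum_{j,q}B_{ij}^{sq}\xi_j^q-\beta_i^s=\frac{1}{N}\sum_{\ell=1}^N R_{\ell i}(\xi)\,Y_{\ell s}.
\end{equation*}
For the empirical $L^2$ formulation I will expand \eqref{eq:G-definition} and \eqref{eq:h-definition} to obtain
\begin{equation*}
	\sum_{k,p}G_{jk}^{qp}\xi_k^p-h_j^q=\frac{1}{N}\sum_{\ell=1}^N\sum_{i=1}^n A_{\ell ij}X_{\ell q}\,R_{\ell i}(\xi).
\end{equation*}
Equivalently, in flattened form, $\mathcal G z-h=\frac1N\mathcal D^{\transposed}(\mathcal D z-y)$, where $\mathcal D z-y$ is exactly the stacked residual. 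This matches the orthogonality relation $\la R(\xi),A(X_qe_j)\ra_{N,n}=0$ stated in Section~\ref{sec:comparison}.

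Second, I will substitute the hypothesis $R_\ell(\xi)=0$ for all $\ell$. Both right-hand sides above then vanish for every index $(j,q)$ and $(i,s)$, so $\xi$ satisfies \eqref{eq:ls-normal-tensor} and \eqref{eq:reduced-tensor-system}. The projected identity uses nothing about the $Y_s$, so the conclusion holds for an arbitrary test basis and any $p$. One could also argue variationally: the objective \eqref{eq:pathwise-ls-objective} is nonnegative and attains $0$ at $\xi$, so $\xi$ is a minimizer and satisfies the normal equations. I will keep the direct algebraic argument, since it covers both systems uniformly.

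Third, for uniqueness I will argue by linearity. Suppose $z,z'$ both solve $\mathcal Gz=h$ (respectively $\mathcal Bz=g$). Then $\mathcal G(z-z')=0$ (respectively $\mathcal B(z-z')=0$), so full column rank gives $z=z'$. For the $L^2$ case I need to identify what "relevant reduced matrix" means. The matrix $\mathcal G=\frac1N\mathcal D^{\transposed}\mathcal D$ has full column rank if and only if $\mathcal D$ does, because $z^{\transposed}\mathcal G z=\frac1N\norm{\mathcal Dz}_2^2$. So either reading of the hypothesis gives uniqueness.

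The main obstacle is only bookkeeping. I need to check that the index contractions in $G$ and $h$ reproduce $\mathcal D^{\transposed}$ applied to the residual, keeping the summation over $i$ and $\ell$ aligned with the flattening $(j,q)$. There is no analytic difficulty; the care lies in stating precisely which matrix's rank is meant in each formulation.
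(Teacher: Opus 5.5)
Your proposal is correct and follows the paper's proof: zero pathwise residual makes every tested residual moment vanish, and the rank condition gives uniqueness. You spell out what the paper leaves implicit. These are the identities $\mathcal G z-h=\frac1N\mathcal D^{\transposed}(\mathcal D z-y)$ and $\mathcal Bz-g=$ tested residual moments (where the paper simply notes that the full empirical residual is zero), together with the observation that $\mathcal G$ and $\mathcal D$ have the same kernel.
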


\begin{proof}
If $A_\ell\phi_\ell^r=b_\ell$ for every path, then $R_\ell(\xi)=0$ for every path. Hence the full empirical residual is zero and all projected residual moments vanish. The rank condition gives uniqueness.
\end{proof}

When exact recovery is not possible, the two residuals should be monitored separately. A small projected residual $\norm{\mathcal Bz-g}$ does not necessarily imply a small full residual $N^{-1}\sum_\ell\norm{A_\ell\phi_\ell^r-b_\ell}_2^2$. Conversely, the full empirical $L^2$ solution may be sensitive to the metric and scaling of the primitive components. Weighted residuals such as \eqref{eq:weighted-pathwise-ls} make this choice explicit.

Regularization trades bias for stability. Increasing $\lambda$ shrinks the coefficient vector and can reduce variance in ill-conditioned systems. Excessive regularization, however, may suppress economically meaningful state dependence. In applications, $\lambda$ can be selected by out-of-sample pathwise residuals, projected residuals, hedge stability, or a trading-cost objective.

\subsection{Choice of basis}
\label{sec:basis-choice}

The solution basis should reflect the information set and smoothness expected of the hedge. Possible choices include:
\begin{enumerate}[(i)]
	\item constants and low-order polynomials of Markov state variables;
	\item splines or local basis functions in selected risk factors;
	\item principal components or other low-dimensional factors extracted from the model primitives;
	\item product bases combining time, state, and instrument-specific features;
	\item indicator functions for regimes or exercise regions, if discontinuities are expected.
\end{enumerate}

A solution basis with too few functions may underfit state-dependent hedge ratios. A solution basis with too many functions may overfit Monte-Carlo noise and lead to ill-conditioned reduced systems. Empirical orthonormalisation improves conditioning at the basis level, but it does not guarantee that the reduced matrices are well-conditioned, because they also depend on the stochastic hedge-instrument sensitivities.

For projected moments, the test basis can be chosen separately. Taking $Y=X$ is the simplest Galerkin choice. A smaller or different test basis can enforce selected diagnostics, reduce variance, or produce an overdetermined moment system without changing the functional form of the hedge ratios.

The solution basis must also be admissible for the intended use. If the hedge is to be implemented using information available at time $t$, then the solution basis functions should depend only on information available at time $t$. Test functions used only for calibration diagnostics may be broader, but using future path features in either basis changes the interpretation of the fitted hedge rule.

\section{Algorithm and implementation}
\label{sec:algorithm}

The reduction can be implemented by streaming empirical accumulations over paths.
\begin{enumerate}[(1)]
	\item \textbf{Simulate and differentiate.} For each path $\omega_\ell$, compute $b_{\ell i}$ and $A_{\ell i j}$ using the chosen algorithmic-differentiation implementation.
	\item \textbf{Evaluate the bases.} Compute the solution basis $X_q$ for the hedge ratios. For projected moments, also compute the test basis $Y_s$; the default Galerkin choice is $Y_s=X_s$.
	\item \textbf{Choose the coefficient criterion.} For full empirical $L^2$, accumulate $G$ and $h$ from \eqref{eq:G-definition} and \eqref{eq:h-definition}, or apply the design matrix $\mathcal D$ matrix-free. For the projected formulation, accumulate $B$ and $\beta$ from \eqref{eq:B-beta-definition}.
	\item \textbf{Solve and regularize.} Solve the direct, least-squares, or regularized system appropriate to the chosen criterion.
	\item \textbf{Reconstruct hedge ratios.} Use \eqref{eq:reconstruction-continuous} or \eqref{eq:reconstruction-paths}.
\end{enumerate}

\subsection{Computational complexity}

The direct pathwise method solves $N$ systems of size $n\times m$ and has $Nm$ pathwise hedge unknowns. Both reduced formulations have only $mr$ unknown coefficients.

For the projected formulation, naive assembly of $B$ costs
\begin{equation}
	O(Nnmpr)
\end{equation}
and stores $O(nmpr)$ numbers, where $p$ is the number of test functions. For the empirical $L^2$ normal matrix, naive assembly costs
\begin{equation}
	O(Nnm^2r^2)
\end{equation}
and stores $O(m^2r^2)$ numbers. The latter can be reduced in memory by iterative, matrix-free least-squares methods that apply $\mathcal D$ and $\mathcal D^{\transposed}$ without explicitly forming $\mathcal G$.

The projected formulation may therefore be cheaper when $m$ is large. The full empirical $L^2$ formulation is more directly tied to the pathwise residual. Both constructions are embarrassingly parallel across paths during assembly.

\subsection{Out-of-sample use}

If the solution basis functions depend only on observable state variables at time $t$, the fitted coefficients can be used on new paths or in a live hedge calculation by evaluating $X_q$ at the new state and applying \eqref{eq:reconstruction-continuous}. Test functions $Y_s$ are needed for fitting and diagnostics, but not for reconstructing the hedge ratios. If empirical orthonormalization transformed an original solution basis into $X$, the same transformation must be stored and reused out of sample.

\subsection{Reference implementation}

A reference implementation is available in the development version of finmath-lib \cite{finmathLib}. It is located in the package
\begin{center}
\texttt{net.finmath.montecarlo.automaticdifferentiation.}\\
\texttt{forwardsensitivities}
\end{center}
and exposed via the static function \texttt{ForwardSensitivities.getHedgeRatios}. The method computes projected or least-squares reduced hedge ratios. It takes the specification of the financial derivative, hedge portfolio, and model parameters, the solution and test basis functions, and the reduction method.

The derivative and hedge portfolio are passed as differentiable random variables, which allows their differentials to be queried by object id.

\section{Conclusion}

We have derived reduced methods for obtaining stochastic hedge ratios from pathwise model-primitive sensitivities. The common dimension reduction is the representation of hedge ratios in a finite empirical basis; the hedge-instrument sensitivity tensor is not separately compressed.

Two coefficient criteria are natural. The empirical $L^2$ formulation minimizes the original pathwise residual in a chosen primitive-space metric. The projected formulation matches residual moments against chosen test functions and avoids products $A^{\transposed}A$. The Galerkin case uses $Y=X$; the Petrov--Galerkin case separates hedge representation from residual testing. These formulations agree in exact-recovery and deterministic-sensitivity limits, but differ in general. Practical implementations should therefore report both pathwise residuals and projected residuals, and should choose the formulation according to the intended hedging objective.

Future work should add numerical experiments, compare the two reductions with regression after pathwise solves, investigate basis selection and regularization, and study dynamic rebalancing performance in realistic market models.

\appendix

\section{Non-orthonormal empirical bases}
\label{sec:nonorthogonal}

The orthonormality assumption \eqref{eq:empirical-orthonormality} is mainly a notational and conditioning convenience. If the solution basis is $Z_1,\ldots,Z_r$, one may use
\begin{equation}
	\phi_j^r=\sum_{q=1}^r\eta_j^qZ_q
\end{equation}
directly. For the empirical $L^2$ formulation, replace $X$ by $Z$ in \eqref{eq:G-definition} and \eqref{eq:h-definition}:
\begin{equation}
	G_{jk}^{qp,Z}:=\frac{1}{N}\sum_{\ell=1}^N\sum_{i=1}^n
	A_{\ell i j}A_{\ell i k}Z_{\ell q}Z_{\ell p},
	\qquad
	h_j^{q,Z}:=\frac{1}{N}\sum_{\ell=1}^N\sum_{i=1}^n
	A_{\ell i j}b_{\ell i}Z_{\ell q}.
\end{equation}

For the projected formulation, let $Y_1,\ldots,Y_p$ be the test basis. The moment equations are
\begin{equation}
	\la (A\phi^r)_i-b_i,Y_s\ra_N=0,
	\qquad i=1,\ldots,n,
	\quad s=1,\ldots,p,
\end{equation}
which gives
\begin{equation}
	\sum_{j=1}^m\sum_{q=1}^r \widetilde B_{ij}^{sq}\eta_j^q
	=\widetilde\beta_i^s,
\end{equation}
with
\begin{equation}
	\widetilde B_{ij}^{sq}
	:=\frac{1}{N}\sum_{\ell=1}^N A_{\ell i j}Z_{\ell q}Y_{\ell s},
	\qquad
	\widetilde\beta_i^s
	:=\frac{1}{N}\sum_{\ell=1}^N b_{\ell i}Y_{\ell s}.
\end{equation}
No Gram matrix is needed to write these moment equations. A Gram matrix appears only if one wants explicit coefficients of an orthogonal projection in a non-orthonormal basis.

\newpage

\section*{Notes}

{\small
\noindent Classification:
JEL-class: C15, C63, G13, G17, G32.
\\
\phantom{Classification:}
MSC 2020-class: 65C05, 91G20, 91G60, 65F20, 65F22.
\\
\phantom{Classification:}
ACM-class: G.1.3, G.1.6, G.3, I.6.6, I.6.8.
\\

\noindent Keywords:
Monte Carlo simulation,
algorithmic differentiation,
adjoint algorithmic differentiation,
stochastic automatic differentiation,
forward sensitivities,
stochastic hedge ratios,
least-squares Monte Carlo,
Galerkin projection,
Petrov--Galerkin projection,
regularization,
initial margin valuation adjustment,
liquidity forecasting.
}

\end{document}